\newcommand{\sa}{\SF{sa}}
\newcommand{\A}{\CC{A}}
\newcommand{\Id}{\BBm1}
\newcommand{\PD}{\RM{PD}}
\title{Selfadhesivity in Gaussian \\ conditional independence structures}
\author{Tobias Boege}
\address{\mbox{Tobias Boege, Department of Mathematics, KTH Stockholm, Sweden}}
\email{post@taboege.de}
\date{\today}
\subjclass[2020]{62R01, 62B10, 15A29, 05B20}
\keywords{
  selfadhesivity,
  adhesive extension,
  positive definite matrix,
  conditional independence,
  structural semigraphoid,
  orientable gaussoid%
}
\begin{document}

\begin{abstract}
Selfadhesivity is a property of entropic polymatroids which guarantees
that the polymatroid can be glued to an identical copy of itself along
arbitrary restrictions such that the two pieces are independent given
the common restriction.
We show that positive~definite matrices satisfy this condition as well
and examine consequences for Gaussian conditional independence structures.
New axioms of Gaussian CI are obtained by applying selfadhesivity to
the previously known axioms of structural semigraphoids and orientable
gaussoids.
\end{abstract}

\maketitle

\section{Introduction}

In matroid theory, the term \emph{amalgam} refers to a matroid in which
two smaller matroids are glued together along a common restriction, similar
to how four triangles can be glued together along edges to form the boundary
of a tetrahedron.
This concept is meaningful for conditional independence (CI) structures
as~well. The bridge from the geometric (matroid-theoretical) concept to
probability theory (conditional independence) was built by Matúš \cite{MatusAdhesive}
who defined a special kind of amalgam, the \emph{adhesive extension},
for polymatroids and proved that such extensions always exist for
entropic polymatroids with a common restriction.

The purpose of this article is two-fold: First, it is to extend
this methodology beyond polymatroids and to introduce a derived collection
of amalgamation properties known as \emph{selfadhesivity} for general conditional
independence structures.
Second, this general treatment of selfadhesivity is driven by its
applications to Gaussian instead of discrete CI~inference. The main
result, \Cref{thm:Adhe}, shows that, also in the Gaussian setting, adhesive
extensions (of covariance matrices) exist and are even unique. We use the
non-trivial structural constraints implied by this result to derive new
axioms for Gaussian conditional independence structures.
These results are heavily based on computations. All source code and
further details on computations are provided on the mathematical research
data repository \emph{MathRepo} hosted by the Max-Planck Institute for
Mathematics in the Sciences; the output data, due to its size, is
available only on the archiving service \emph{KEEPER} of the Max-Planck
Society:
\begin{center}
  \begin{tabular}{rl}
  MathRepo: & \url{https://mathrepo.mis.mpg.de/SelfadhesiveGaussianCI/} \\
  KEEPER:   & \url{https://keeper.mpdl.mpg.de/d/fbfe463162e94a14ac28/}
  \end{tabular}
\end{center}

\section{Preliminaries}

\paragraph{Gaussian conditional independence}

Let $N$ be a finite ground set indexing jointly distributed random variables
$\xi = (\xi_i : i \in N)$. By convention, elements of $N$ are denoted by
$i,j,k, \dots$ and subsets by $I,J,K, \dots$. Elements are identified with
singleton subsets of $N$ and juxtaposition of subsets abbreviates set union.
Thus, an expression such as $iK$ is shorthand for $\Set{i} \cup K$ as a
subset of~$N$. The complement of $K \subseteq N$ is $K^\co$. The set of all
$k$-element subsets of $N$ is $\binom{N}{k}$ and the powerset of $N$ is $2^N$.

We are mostly interested in Gaussian (i.e., multivariate normal) distributions.
These distributions are specified by a small number of parameters, namely
by the mean vector $\mu \in \BB R^N$ and the covariance matrix $\Sigma \in \PD_N$,
where $\PD_N$ is the set of positive definite matrices.
Throughout this article, ``Gaussian'' means ``regular Gaussian'', i.e., the
covariance matrix is strictly positive~definite. For positive semidefinite
covariance matrices, which lie on the boundary of $\PD_N$, the CI~theory
is algebraically more complicated and valid inference properties for regular
Gaussians can fail to be valid for singular ones; see \cite[Section~2.3.6]{Studeny}.

The following result summarizes basic facts from algebraic statistics relating
subvectors of $\xi$ and their (positive~definite) covariance matrices.
It can be found, for instance, in \S2.4 of \cite{Sullivant}.
For $\Sigma \in \PD_N$ and $I,J,K\subseteq N$, let $\Sigma_{I,J}$ denote
the submatrix with rows indexed by $I$ and columns by~$J$. Submatrices of
the form $\Sigma_K \defas \Sigma_{K,K}$ are \emph{principal}. Dual to a
principal submatrix is its \emph{Schur complement} $\Sigma^K \defas
\Sigma_{K^\co} - \Sigma_{K^\co,K} \Sigma_K^{-1} \Sigma_{K,K^\co}$
in~$\Sigma$. Its rows and columns are indexed by $K^\co$ and
its entries are functions of all the entries of $\Sigma$.
Principal submatrices of and Schur complements in positive~definite
matrices are also~positive~definite. The Schur complement construction
is valid in greater generality which we will need below as well. Let $A$
be any (not necessarily positive definite, or even square) matrix whose
rows are indexed by $IK$ and columns by $JK$, where $I,J,K$ are pairwise
disjoint, and suppose that the principal submatrix $A_K$ is invertible.
Then the Schur complement $A^K = A_{I,J} - A_{I,K} A_K^{-1} A_{K,J}$ is
well-defined and its rows are indexed by $I$ and its columns by~$J$.
See \cite{Zhang} for an introduction to theory of Schur complements
in matrix analysis.

\begin{theorem*}
  Let $\xi$ be distributed according to the (regular) Gaussian
  distribution with mean $\mu \in \BB R^N$ and covariance
  $\Sigma \in \PD_N$.  Let $K\subseteq N$.
\begin{itemize}[itemsep=0.6em]
\item The marginal vector $\xi_K = (\xi_k : k \in K)$ is a regular
  Gaussian in $\BB R^K$ with mean vector $\mu_K$ and covariance~$\Sigma_K$.
\item Given $y \in \BB R^K$, the conditional $\xi_{K^\co} \mid \xi_K = y$
  is a regular Gaussian in $\BB R^{K^\co}$ with mean vector $\mu_{K^\co} +
  \Sigma_{K^\co,K} \Sigma_K^{-1} (y - \mu_K)$ and covariance $\Sigma^K$.
\item Let a Gaussian distribution over $N = IJ$ be given with
  covariance~$\Sigma \in \PD_{IJ}$.  Then~the marginal independence
  $\CI{\xi_I,\xi_J}$ holds if and only if $\Sigma_{I,J} = 0$.
  \qedhere
\end{itemize}
\end{theorem*}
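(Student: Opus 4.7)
The plan is to work throughout with characteristic functions, since these uniquely determine distributions and behave very cleanly under linear transformations and marginalisation. For a regular Gaussian on $\BB R^N$ with parameters $(\mu, \Sigma)$, the characteristic function is
\begin{equation*}
  \phi(t) = \exp\bigl(\mathrm{i}\, t^\top \mu - \tfrac{1}{2} t^\top \Sigma t\bigr), \quad t \in \BB R^N,
\end{equation*}
and I would use this as the one computational tool for all three parts.

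The marginal claim is then almost immediate: restricting $\phi$ to $t = (t_K, 0)$ yields $\exp\bigl(\mathrm{i}\, t_K^\top \mu_K - \tfrac{1}{2} t_K^\top \Sigma_K t_K\bigr)$, which is the characteristic function of a Gaussian with parameters $(\mu_K, \Sigma_K)$, and this is regular because $\Sigma_K \in \PD_K$. For the marginal independence claim, split $t = (t_I, t_J)$; the quadratic form in the exponent then separates into the two marginal quadratic forms plus a cross term $t_I^\top \Sigma_{I,J} t_J$. The joint characteristic function factors as a product of the two marginal ones, which is equivalent to independence of $\xi_I$ and $\xi_J$, if and only if this cross term vanishes for all $t_I, t_J$, i.e.\ precisely when $\Sigma_{I,J} = 0$.

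I expect the conditional claim to be the main obstacle, and I would attack it by a regression argument rather than direct manipulation of densities. Define the residual
\begin{equation*}
  \eta \defas \xi_{K^\co} - \Sigma_{K^\co,K} \Sigma_K^{-1} \xi_K.
\end{equation*}
As a linear function of $\xi$, the pair $(\xi_K, \eta)$ is jointly Gaussian, and a routine block computation gives $\mathrm{Cov}(\xi_K, \eta) = \Sigma_{K,K^\co} - \Sigma_K \Sigma_K^{-1} \Sigma_{K,K^\co} = 0$, so by the marginal independence criterion just established, $\eta$ is independent of $\xi_K$. Its mean is $\mu_{K^\co} - \Sigma_{K^\co,K} \Sigma_K^{-1} \mu_K$ and its covariance works out to the Schur complement $\Sigma^K$, which is positive~definite by the remark in the excerpt, so $\eta$ is regular Gaussian. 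Conditioning on $\xi_K = y$ therefore leaves $\eta$ with its unconditional law, and rewriting $\xi_{K^\co} = \eta + \Sigma_{K^\co,K} \Sigma_K^{-1} \xi_K$ exhibits $\xi_{K^\co} \mid \xi_K = y$ as the deterministic shift of $\eta$ by $\Sigma_{K^\co,K} \Sigma_K^{-1} y$, yielding the claimed conditional mean and covariance.
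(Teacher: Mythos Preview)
Your proof is correct. The paper, however, does not give its own proof of this theorem: it states the result as a summary of basic facts and simply refers the reader to \S2.4 of Sullivant's textbook, with a \texttt{\textbackslash qedhere} marking the end of the statement and no argument following. So there is nothing to compare your approach against; you have supplied a complete proof where the paper only cites one.

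That said, your choices are the standard ones and well-organised: the characteristic-function computation dispatches the marginal and independence claims with essentially no work, and the regression decomposition $\eta = \xi_{K^\co} - \Sigma_{K^\co,K}\Sigma_K^{-1}\xi_K$ is the cleanest route to the conditional law, since it reduces conditioning to a deterministic shift of an independent Gaussian. One small remark: for the independence claim, the direction ``$\xi_I \CIperp \xi_J \Rightarrow \Sigma_{I,J}=0$'' does not even need characteristic functions---independence forces the cross-covariance block to vanish directly---so the characteristic-function factorisation is really only doing work for the converse.
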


The general CI~statement $\CI{\xi_I,\xi_J|\xi_K}$, with $I,J,K$ pairwise
disjoint, is the result of marginalizing $\xi$ to $IJK$, conditioning on
$K$ and then checking for independence of $I$ and $J$. The previous
theorem implies the following algebraic CI~criteria for regular Gaussians:
\begin{align*}
  \CI{\xi_I,\xi_J|\xi_K}
    &\;\Leftrightarrow\; \left( \Sigma_{IJ} - \Sigma_{IJ,K}
       \Sigma_K^{-1} \Sigma_{K,IJ} \right)_{I,J} = 0 \\
    &\;\Leftrightarrow\; \Sigma_{I,J} - \Sigma_{I,K}
       \Sigma_K^{-1} \Sigma_{K,J} = 0              \tag{$\CIperp_1$} \label{eq:CIL} \\
    &\;\Leftrightarrow\; \Rk \Sigma_{IK,JK} = |K|. \tag{$\CIperp_2$}  \label{eq:CI}
\end{align*}
Here, $\Rk$ denotes the rank of a matrix and the last equivalence
follows from rank additivity of the Schur complement (see \cite{Zhang}).
Indeed, the matrix in \eqref{eq:CIL} is the Schur complement of $K$ in
$\Sigma_{IK,JK}$ and must have rank zero since the principal submatrix
$\Sigma_K$ has full rank $|K|$ already because it is positive~definite.
In particular, the truth of a conditional independence statement does not
depend on the conditioning event and it does not depend on the mean~$\mu$.
Hence, for CI purposes in this article, we identify regular Gaussians with
their covariance matrices~$\Sigma \in \PD_N$.

Rank additivity of the Schur complement also shows that the ``$\ge$'' part
of the rank condition in \eqref{eq:CI} always holds. Hence, the minimal rank
$|K|$ is attained if and only if all minors of $\Sigma_{IK,JK}$ of size
$|K|+1$ vanish.
But only a subset of these minors is necessary: by \eqref{eq:CIL} the rank
of $\Sigma_{IK,JK}$ is $|K|$ if and only if
$\Sigma_{I,J} = \Sigma_{I,K}\Sigma_K^{-1}\Sigma_{K,J}$ holds.
This is one polynomial condition for each $i \in I$ and $j \in J$, namely
$\det \Sigma_{iK,jK} = 0$ --- again by Schur complement expansion of the
determinant. These minors correspond to CI~statements of the form
$\CI{\xi_i,\xi_j|\xi_K}$. This proves the following ``localization rule''
for Gaussian conditional independence:
\[
  \label{eq:L} \tag{L}
  \CI{\xi_I,\xi_J|\xi_K} \;\Leftrightarrow\; \bigwedge_{i \in I, j \in J}
    \CI{\xi_i,\xi_j|\xi_K}.
\]
Rules of this form go back to \cite{MatusAscending}. A weaker localization
rule~\eqref{eq:L'} (discussed below) holds for all semigraphoids,
whereas the one presented above can be proved for compositional graphoids;
see~\cite{UnifyingMarkov} in the context of graphical models.
In~both cases, a general CI~statement is reduced to a conjunction of
\emph{elementary CI~statements} $\CI{\xi_i,\xi_j|\xi_K}$ about the
independence of two singletons.
We adopt the form $\CI{I,J|K}$ for CI~statements $\CI{\xi_I,\xi_J|\xi_K}$
without the mention of a random vector. These symbols are treated as
combinatorial objects and $\A_N \defas \{\, \CI{i,j|K} : ij \in
\binom{N}{2}, \allowbreak {K \subseteq N \setminus ij} \,\}$ is the
set of all elementary CI~statements.
The \emph{CI~structure} of $\Sigma$ is the set
\[
  \CIS{\Sigma} \defas \Set{ \CI{i,j|K} \in \A_N : \det \Sigma_{iK,jK} = 0 }.
\]
The localization rule shows that $\CIS{\Sigma}$ encodes the entire set
of true CI~statements for a Gaussian with covariance matrix~$\Sigma$
and with slight abuse of notation we employ statements such as
$\CI{I,J|K} \in \CIS{\Sigma}$.

It is important to note in this context that we treat only \emph{pure}
CI~statements, i.e., $\CI{I,J|K}$ where $I,J,K$ are pairwise
disjoint. Any general CI~statement with overlaps between the three sets
decomposes, analogously to the localization rule, into a conjunction of
pure CI~statements and functional dependence statements. For~a~regular
Gaussian, functional dependences are always false, so this is no
restriction in generality. In particular, the general statement $\CI{N,M|L}$,
which frequently appears later, is equivalent to $\CI{(N\setminus L), (M\setminus L)|L}$
which is pure provided that~${L \supseteq N \cap M}$.

\paragraph{Polymatroids and selfadhesivity}

A \emph{polymatroid} over the finite ground set $N$ is a function
$h: 2^N \to \BB R$ assigning to every subset $K \subseteq N$ a real
number, such that $h$ is
\begin{description}[itemsep=0pt]
\item[normalized] $h(\emptyset) = 0$,
\item[isotone]    $h(I) \le h(J)$ for $I \subseteq J$,
\item[submodular] $h(I) + h(J) \ge h(I \cup J) + h(I \cap J)$.
\end{description}
With the linear functional $\CId{I,J|K} \cdot h \defas h(IK) + h(JK) -
h(IJK) - h(K)$, submodularity can be restated as $\CId{I,J|K} \cdot h \ge 0$
for all pairwise disjoint $I, J, K$. If $h_\xi$ is the entropy
vector of a discrete random vector~$\xi$, i.e., $h_\xi(K)$ is the Shannon
entropy of the marginal vector $\xi_K$, then it is a polymatroid and
the quantity $\CId{I,J|K} \cdot h_\xi$ is known as the \emph{conditional
mutual information} $I(\xi_I; \xi_J | \xi_K)$. Its vanishing is equivalent
to the conditional independence $\CI{\xi_I,\xi_J|\xi_K}$.
Hence we may define the CI~structure of a polymatroid as
$\CIS{h} \defas \Set{ \CI{i,j|K} \in \A_N : \CId{ij|K} \cdot h = 0 }$.
These structures are called \emph{(elementary) semimatroids} in
\cite{MatusMatroids} and (equivalently, but based on properties of
multiinformation instead of entropy vectors) \emph{structural semigraphoids}
in \cite{StudenyStructural}.
Again, per \cite{MatusMatroids} a localization rule holds for them
which we use to interpret the containment of non-elementary CI~statements:
\[
  \label{eq:L'} \tag{$\text{L}'$}
  \CI{I,J|K} \in \CIS{h} \;\Leftrightarrow\; \bigwedge_{\substack{i \in I, j \in J, \\
    K \subseteq L \subseteq IJK \setminus ij}} \CI{i,j|L} \in \CIS{h}.
\]
This rule can be proved from the semigraphoid axioms and hence it holds
true also for Gaussians. In this case, it is equivalent to the shorter
rule~\eqref{eq:L} using that Gaussians are compositional graphoids.

Matúš in~\cite{MatusAdhesive} introduced the notions of adhesive extensions
and selfadhesive polymatroids to mimic a curious amalgamation property of
entropy vectors. The underlying construction is the \emph{Copy lemma}
of \cite{ZhangYeung}, also known as the \emph{conditional product};
see \cite[Section~2.3.3]{Studeny}.
For any polymatroid $g: 2^N \to \BB R$ and subset $L \subseteq N$
the \emph{restriction} $g|_L: 2^L \to \BB R$ given by $g|_L(K) \defas g(K)$,
$K \subseteq L$, is again a polymatroid.
Let $g$ and $h$ be two polymatroids on ground sets $N$~and~$M$, respectively,
and suppose that their restrictions $g|_L$ and $h|_L$ to $L = N \cap M$
coincide. A~polymatroid $f$ on $NM$ is an \emph{adhesive extension} of $g$
and $h$ if:
\begin{itemize}[itemsep=0pt]
\item $f|_N = g$ and $f|_M = h$,
\item $\CI{N,M|L} \in \CIS{f}$.
\end{itemize}
Since $L \subseteq N$ and $L \subseteq M$, the statement $\CI{N,M|L}$
is naturally equivalent to the pure CI~statement $\CI{N',M'|L}$ with
$N' = N \setminus L$ and $M' = M \setminus L$. In polymatroidal terms,
$N$~and $M$ are said to form a \emph{modular pair} in $f$ if this
CI~statement holds.

Next, suppose that we have only one polymatroid $h$ on ground set $N$
and fix $L \subseteq N$. An \emph{$L$-copy} of $N$ is a finite set $M$
with $|M| = |N|$ and $M \cap N = L$. We fix a bijection $\pi: N \to M$
which preserves $L$ pointwise. The polymatroid $h$ is a
\emph{selfadhesive polymatroid at $L$} if there exists a polymatroid
$\ol{h}$ which is an adhesive extension of $h$ and its induced copy $\pi(h)$
over their common restriction to~$L$. The polymatroid is \emph{selfadhesive}
if it is selfadhesive at every $L \subseteq N$. The fundamental result of
\cite{MatusAdhesive} is:

\begin{theorem*}
Any two of the restrictions of an entropic polymatroid have an entropic
adhesive extension. In particular, entropy vectors are selfadhesive.
\end{theorem*}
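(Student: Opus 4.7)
The plan is to deduce both assertions from the conditional product (Copy lemma) of Zhang and Yeung cited in the excerpt. I read the first statement as follows: an entropic polymatroid $h$ is realized by a discrete random vector $\xi$ on a ground set $S$, and two subsets $N, M \subseteq S$ with $L = N \cap M$ are fixed; the task is to construct an entropic adhesive extension of $h|_N$ and $h|_M$ on $NM$. The advantage of this reading is that all $L$-marginals come from the same $\xi_L$, so there is no distributional mismatch between the two restrictions to handle.

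I would construct the extension via a new random vector $\zeta$ on $NM$ with joint law
\[
  P(\zeta_L = l,\ \zeta_{N\setminus L} = a,\ \zeta_{M\setminus L} = b)
    \defas P(\xi_L = l)\, P(\xi_{N\setminus L} = a \mid \xi_L = l)\,
           P(\xi_{M\setminus L} = b \mid \xi_L = l),
\]
that is, sample $\zeta_L$ from the $L$-marginal of $\xi$ and then independently sample $\zeta_{N\setminus L}$ and $\zeta_{M\setminus L}$ from the two conditional distributions of $\xi$ given $\xi_L$. Two short verifications finish the first part. Summing the above over $b$ shows $\zeta_N$ has the same distribution as $\xi_N$, so the entropy vector of $\zeta$ restricted to $2^N$ equals $h|_N$; symmetrically for $M$. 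By inspection of the formula, $\zeta_{N\setminus L}$ and $\zeta_{M\setminus L}$ are conditionally independent given $\zeta_L$, which via the localization rule \eqref{eq:L'} is precisely the pure form of the required $\CI{N, M | L}$.

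For selfadhesivity, let $h$ be entropic on $N$ realized by $\xi$, fix $L \subseteq N$, and take the bijection $\pi : N \to M$ fixing $L$ pointwise. I would produce a joint random vector on $NM$ by combining $\xi$ with a relabelled copy along $\pi$ that keeps the $L$-coordinates untouched; its natural restrictions to $N$ and $M$ then realize $h$ and $\pi(h)$ respectively and share the same $L$-marginal. Applying the conditional product construction above now yields an entropic adhesive extension, witnessing selfadhesivity of $h$ at $L$.

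The only real obstacle is conceptual rather than computational: the conditional product discards the original joint law of $\xi$ on $N \cup M$ (which need not satisfy any nontrivial CI) in favour of a new joint law that \emph{forces} $\CI{N, M | L}$ while preserving each of the two marginals individually. Once the construction is written down, both verifications are routine bookkeeping with discrete conditional probabilities; no analytic subtlety arises beyond ensuring that the two polymatroids being glued share a common realization on $L$, which is automatic in both parts under the reading above.
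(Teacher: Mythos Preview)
The paper does not prove this theorem; it is quoted as the ``fundamental result'' of \cite{MatusAdhesive}, with the underlying construction (the Copy lemma / conditional product) attributed to \cite{ZhangYeung} and \cite{Studeny}. Your proposal reconstructs precisely that standard argument and is correct: form the conditional product of $\xi_{N\setminus L}$ and $\xi_{M\setminus L}$ over the shared $\xi_L$, verify by marginalizing that $\zeta_N \sim \xi_N$ and $\zeta_M \sim \xi_M$, and read off $\CI{N\setminus L, M\setminus L \mid L}$ from the product form of the conditional law.

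One small comment on the selfadhesivity paragraph: the intermediate step of first ``combining $\xi$ with a relabelled copy along $\pi$'' into some joint vector on $NM$ and only afterwards applying the conditional product is a detour. You can write the conditional product down directly: sample $\zeta_L$ from the law of $\xi_L$, then draw $\zeta_{N\setminus L}$ and $\zeta_{M\setminus L}$ independently from two copies of the conditional law of $\xi_{N\setminus L}$ given $\xi_L$, the second relabelled via $\pi$. This yields the adhesive extension in one step without constructing an auxiliary joint distribution that you immediately discard.
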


\begin{remark}
The set of polymatroids on $N$ which are selfadhesive forms a rational,
polyhedral cone in $\BB R^{2^N}$. To see this, let $N$, a subset $L \subseteq N$
and an $L$-copy $M$ of $N$ with bijection $\pi$ be fixed. The conditions
for a pair $(h, \ol{h})$, where $h: 2^N \to \BB R$ and $\ol{h}: 2^{NM}
\to \BB R$, to be polymatroids and $\ol{h}$ to be an adhesive extension
of $h$ and $\pi(h)$ are homogeneous linear equalities and inequalities with
integer coefficients in the entries of $h$ and $\ol{h}$. Hence, the set of
such pairs is a rational, polyhedral cone in $\BB R^{2^N} \times \BB R^{2^{NM}}$.
By the Fourier--Motzkin elimination theorem \cite[Theorem~1.4]{Ziegler}, these
properties are inherited by the projection down to $\BB R^{2^N}$ which consists
of all polymatroids $h$ which are selfadhesive at~$L$. Intersecting these
cones for all $L$ gives the desired set of selfadhesive polymatroids and shows
that this set is a rational, polyhedral cone.
\end{remark}

\begin{remark}
Linear inequalities which are valid for entropic polymatroids are called
\emph{information inequalities}. The above observation implies that selfadhesivity,
as a necessary condition for entropicness, captures only finitely many
information inequalities for each fixed~$N$. By contrast, Matúš~\cite{MatusInfinite}
showed that even for $|N| = 4$ there are infinitely many irredundant information
inequalities.

In the $|N| = 4$ case, the cone of selfadhesive polymatroids is characterized
(in addition to the polymatroid properties) by the validity of the Zhang--Yeung
inequalities (see \Cref{rem:ZY}). In~this sense, selfadhesivity is a reformulation
of the Zhang--Yeung inequalities using only the notions of restriction and
conditional independence. The generalization of the concept of adhesive extension
to more than one $L$-copy of a polymatroid leads to the \emph{book inequalities}
of~\cite{BookIneq}.
\end{remark}

\section{Adhesive extensions of Gaussians}

The analogous result for Gaussian covariance matrices is our main theorem:

\begin{theorem} \label{thm:Adhe}
Let $\Sigma \in \PD_N$ and $\Sigma' \in \PD_M$ be two covariance
matrices with common restriction $\Sigma_L = \Sigma'_L$, where
$L = N \cap M$. There exists a unique $\Phi \in \PD_{NM}$
such that:
\begin{itemize}[itemsep=0pt]
\item $\Phi_N = \Sigma$ and $\Phi_M = \Sigma'$,
\item $\CI{N,M|L} \in \CIS{\Phi}$.
\end{itemize}
\end{theorem}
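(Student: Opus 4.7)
The plan is to observe that the matching conditions $\Phi_N = \Sigma$ and $\Phi_M = \Sigma'$ determine every block of the $NM$-indexed matrix $\Phi$ except the off-diagonal block $\Phi_{N',M'}$, where $N' = N \setminus L$ and $M' = M \setminus L$. Consistency of the two prescribed restrictions on the overlap is guaranteed by the hypothesis $\Sigma_L = \Sigma'_L$, so the only degree of freedom is the rectangular block $\Phi_{N',M'}$.

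Next, I would use the algebraic criterion \eqref{eq:CIL} applied to the pure statement $\CI{N',M'|L}$: it requires precisely
\[
  \Phi_{N',M'} \;=\; \Sigma_{N',L}\,\Sigma_L^{-1}\,\Sigma'_{L,M'}.
\]
This equation both exists and is unique, so existence and uniqueness of a \emph{candidate} matrix $\Phi$ satisfying the listed combinatorial constraints is immediate. The content of the theorem is really the positive~definiteness of this candidate.

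For positive definiteness, the cleanest route is to compute the Schur complement of the principal block $\Phi_L = \Sigma_L$ inside $\Phi$. Since $\Sigma_L \succ 0$, the matrix $\Phi$ is positive~definite if and only if this Schur complement $\Phi^L$ is. Block-expanding $\Phi^L$ with rows and columns in $N'$ and $M'$, the diagonal blocks reduce to the Schur complements $\Sigma^L$ and $(\Sigma')^L$ respectively, while the off-diagonal block vanishes by the very definition of $\Phi_{N',M'}$. Thus
\[
  \Phi^L \;=\; \begin{pmatrix} \Sigma^L & 0 \\ 0 & (\Sigma')^L \end{pmatrix},
\]
which is block diagonal with positive~definite blocks by the standard result quoted in the preliminaries.

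I expect the bookkeeping to be entirely routine; there is no real obstacle, only a decision about presentation. The one point worth emphasizing is that the pure CI statement $\CI{N',M'|L}$ is used in the form of a linear equation on a single unknown matrix block, while positive~definiteness drops out of a single Schur complement computation. The same argument shows that the computed $\Phi$ is the covariance matrix of the Gaussian distribution obtained by conditionally coupling two Gaussians with the shared marginal $\xi_L$ and making $\xi_{N'}$ and $\xi_{M'}$ independent given $\xi_L$, which gives an independent probabilistic sanity check.
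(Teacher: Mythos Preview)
Your proposal is correct and follows essentially the same route as the paper: determine the single missing block uniquely from the CI constraint, then verify positive definiteness by reducing $\Phi$ to the block-diagonal form with blocks $\Sigma_L$, $\Sigma^L$, $(\Sigma')^L$. The paper carries out the last step via an explicit congruence $P^\T \Phi P$, which is precisely the computation underlying your Schur-complement criterion, and it derives the formula for the missing block from the rank condition \eqref{eq:CI} rather than \eqref{eq:CIL}, but these are the same argument in different notation.
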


\begin{proof}
Let $N' = N \setminus L$, $M' = M \setminus L$. We use the following
names for blocks of $\Sigma$ and~$\Sigma'$:
\[
  \Sigma = \kbordermatrix{
       & L    & N' \\
    L  & X    & A  \\
    N' & A^\T & Y
  }, \qquad\qquad
  \Sigma' = \kbordermatrix{
       & L    & M' \\
    L  & X    & B \\
    M' & B^\T & Z
  }.
\]
Consider the matrix
\[
  \Phi = \kbordermatrix{
       & L    & N'         & M'      \\
    L  & X    & A          & B       \\
    N' & A^\T & Y          & \Lambda \\
    M' & B^\T & \Lambda^\T & Z
  },
\]
where $\Lambda$ will be determined shortly. Its restrictions to $N$ and
$M$ are clearly equal to~$\Sigma$ and $\Sigma'$, respectively.
The CI~statement $\CI{N,M|L}$ is equivalent to the rank requirement
$\Rk \Phi_{N,M} = |N \cap M| = |L|$, but then rank additivity of the
Schur complement shows
\[
  |L| = \Rk \Phi_{N,M} = \Rk \begin{pmatrix} X & B \\ A^\T & \Lambda\end{pmatrix} =
  \underbrace{\Rk X}_{= |L|} + \Rk(\Lambda - A^\T X^{-1} B).
\]
This implies $\Lambda = A^\T X^{-1} B$ and thus $\Phi$ is uniquely
determined by $\Sigma$ and $\Sigma'$ via the two conditions in the
\namecref{thm:Adhe}.
To show positive definiteness, consider the transformation
\[
  P = \kbordermatrix{
       & L   & N'        & M'        \\
    L  & \Id & -X^{-1} A & -X^{-1} B \\
    N' & 0   & \Id       & 0         \\
    M' & 0   & 0         & \Id
  }
\]
of the bilinear form $\Phi$:
\[
  P^\T \Phi P = \begin{pmatrix}
    X & 0 & 0 \\
    0 & Y - A^\T X^{-1} A & 0 \\
    0 & 0 & Z - B^\T X^{-1} B
  \end{pmatrix} =
  \begin{pmatrix}
    \Sigma_L & 0 & 0 \\
    0 & \Sigma^L & 0 \\
    0 & 0 & \Sigma'^L
  \end{pmatrix}.
\]
The result is clearly positive~definite and since $P$ is invertible,
this shows $\Phi \in \PD_{NM}$.
\end{proof}

\begin{remark} \label{rem:Machinery}
An alternative proof of this theorem was kindly pointed out by one of the
referees. It relies on viewing the existence of $\Phi$ as a positive definite
matrix completion problem where the entries of $\Phi_N$ and $\Phi_M$ are
prescribed and the submatrix $\Phi_{N',M'}$ is left unspecified. The machinery
developed in \cite{PosdefCompletion} shows that a positive definite completion
exists and that there is a unique completion $\Psi$ with maximal determinant.
This matrix satisfies $(\Psi^{-1})_{N',M'} = 0$ which is equivalent to
$\CI{N,M|L}$ by the duality concept in Gaussian CI~theory; cf.~\cite[Proposition~3.10]{Dissert}.
\end{remark}

\begin{remark} \label{rem:ZY}
Zhang and Yeung~\cite{ZhangYeung} proved the first information inequality for
entropy vectors which is not a consequence of the Shannon inequalities
(equivalently, the polymatroid properties). It can be expressed as
the non-negativity of the functional
\[
  \CIz{i,j|kl} \defas \CId{kl|i} + \CId{kl|j} + \CId{ij|} - \CId{kl|}
    + \CId{ik|l} + \CId{il|k} + \CId{kl|i}.
\]
Matúš~\cite{MatusAdhesive} characterized the selfadhesive polymatroids over
a 4-element ground set as those polymatroids satisfying $\CIz{i,j|kl} \ge 0$
for all choices of $i,j,k,l$. As a corollary to \Cref{thm:Adhe} we obtain
that the multiinformation vectors and hence the differential entropy vectors
of Gaussian distributions satisfy the Zhang--Yeung inequalities. This is one
half of the result proved by Lněnička~\cite{Lnenicka}. However, that result
also follows from the metatheorem of Chan~\cite{ChanBalanced} since $\CIz{i,j|kl}$
is balanced.
\end{remark}

In the theory of regular Gaussian conditional independence structures,
it is natural to relax the positive definiteness assumption on $\Sigma$
to that of \emph{principal regularity}, i.e., all principal minors,
instead of being positive, are required not to vanish. Principal
regularity is the minimal technical condition which allows the formation
of all Schur complements and the property is inherited by principal
submatrices and Schur complements, hence enabling analogues of
marginalization and conditioning over general fields instead of
the field $\BB R$; see \cite{Gaussant} for applications. However,
the last step in the above proof of \Cref{thm:Adhe} requires
positive definiteness and does not work for principally regular matrices:

\begin{example} \label{ex:PRnotSelfadhe}
Consider the following principally regular matrix over $N = ijkl$:
\[
  \renewcommand{\arraystretch}{1.2}
  \Gamma = \kbordermatrix{
      & i               &  j                &        & k                        & l                 \\
    i & 1               &  0                & \vrule & 0                        & \sfrac1{\sqrt2}   \\
    j & 0               &  1                & \vrule & \sfrac1{2\sqrt2}         & 0                 \\ \cline{2-6}
    k & 0               &  \sfrac1{2\sqrt2} & \vrule & 1                        & \sfrac{\sqrt3}{2} \\
    l & \sfrac1{\sqrt2} &  0                & \vrule & \sfrac{\sqrt3}{2}        & 1
  }
\]
and fix $L = ij$. By the proof of \Cref{thm:Adhe}, the submatrix and rank
conditions uniquely determine an adhesive extension of $\Gamma$ with an
$L$-copy of itself over the ground set $\I{ijklk'l'}$. This unique candidate
matrix~is
\[
  \renewcommand{\arraystretch}{1.2}
  \kbordermatrix{
       & i               &  j                &        & k                        & l                 &        & k'                       & l'                \\
    i  & 1               &  0                & \vrule & 0                        & \sfrac1{\sqrt2}   & \vrule & 0                        & \sfrac1{\sqrt2}   \\
    j  & 0               &  1                & \vrule & \sfrac1{2\sqrt2}         & 0                 & \vrule & \sfrac1{2\sqrt2}         & 0                 \\ \cline{2-9}
    k  & 0               &  \sfrac1{2\sqrt2} & \vrule & 1                        & \sfrac{\sqrt3}{2} & \vrule & \sfrac18                 & 0                 \\
    l  & \sfrac1{\sqrt2} &  0                & \vrule & \sfrac{\sqrt3}{2}        & 1                 & \vrule & 0                        & \sfrac12          \\ \cline{2-9}
    k' & 0               &  \sfrac1{2\sqrt2} & \vrule & \sfrac18                 & 0                 & \vrule & 1                        & \sfrac{\sqrt3}{2} \\
    l' & \sfrac1{\sqrt2} &  0                & \vrule & 0                        & \sfrac12          & \vrule & \sfrac{\sqrt3}{2}        & 1
  }.
\]
But this matrix is not principally regular, as the $\I{lk'l'}$-principal
minor is zero. However, the CI~structure $\CC G = \CIS{\Gamma}$
is the dual of the graphical model for the undirected path {$i$~--~$l$~--~$k$~--~$j$};
cf.~\cite[Section~3]{LnenickaMatus}.
This implies that $\CC G$ is representable by a positive definite
matrix with rational entries and even though the particular matrix representation
$\Gamma$ does not have a selfadhesive extension (in the sense of \Cref{thm:Adhe}),
another representation of $\CC G$ exists which is positive definite and hence
selfadhesive.
\end{example}

\section{Structural selfadhesivity}

The existence of adhesive extensions and in particular selfadhesivity
of positive~definite matrices induces similar properties on their
CI~structures, since the conditions in \Cref{thm:Adhe} can be formulated
using only the concepts of restriction and conditional independence.
On the CI~level, we sometimes use the term \emph{structural selfadhesivity}
to emphasize that it is generally a weaker notion than what is proved for
covariance matrices above. Selfadhesivity can be used to strengthen
known properties of CI~structures: if it is known that all positive~definite
matrices have a certain distinguished property $\FR p$, then the fact that
$\Sigma$ and any $L$-copy of it fit into an adhesive, positive~definite
extension obeying $\FR p$ says more about the structure of~$\Sigma$ than
$\FR p$ alone. We begin by making precise the notion of a \emph{property}:

\begin{definition}
Let $\FR A_N = 2^{\A_N}$ be the set of all CI~structures over $N$.
For $N = [n] = \Set{1, \dots, n}$ we use abbreviations $\A_n$ and
$\FR A_n$.
A \emph{property} of CI~structures is an element $\FR p$ of the
\emph{property lattice}
\[
  \FR P \defas \bigtimes_{n=1}^\infty 2^{\FR A_n}.
\]
\end{definition}

A property $\FR p$ consists of one set $\FR p(n) \subseteq \FR A_n$ per
finite cardinality~$n$. This is the set of CI~structures over~$[n]$
which ``have property $\FR p$''. CI~structures $\CC L$ and $\CC M$ over $N$
and $M$, respectively, are \emph{isomorphic} if there is a bijection
$\pi: N \to M$ such that under the induced map $\CC M = \pi(\CC L)$.
We are only interested in properties which are invariant under isomorphy.
Hence, the choice of ground sets $[n]$ presents no restriction. Moreover,
we freely identify isomorphic CI~structures in the following. In particular,
each $k$-element subset $K \subseteq [n]$ will be tacitly identified
with~$[k]$ and we use notation such as $\FR p(K)$.

\begin{example}
By the localization rule \eqref{eq:L'}, the well-known semigraphoid axioms
of \cite{PearlPazGraphoids} reduce to the single inference rule
\[
  \label{eq:sg} \tag{S}
  \CI{i,j|L} \wedge \CI{i,k|jL} \Rightarrow \CI{i,j|kL} \wedge \CI{i,k|L}.
\]
Being a semigraphoid is a property defined by
\begin{gather*}
  \FR{sg}(n) \defas \Set{ \CC L \subseteq \A_n : \text{\eqref{eq:sg} holds for
  $\CC L$ for all $ijk \in \binom{[n]}{3}$ and $L \subseteq [n] \setminus ijk$} }.
\end{gather*}
Being realizable by a Gaussian distribution is another property
\[
  \FR g^+(n) \defas \Set{ \CIS{\Sigma} \in \A_n : \Sigma \in \PD_n }.
\]
Both are closed under restriction, which can be expressed as follows:
for every $\CC L \in \FR p(N)$ and every $K \subseteq N$ we have
$\CC L|_K \defas \CC L \cap \A_K \in \FR p(K)$.
\end{example}

The property lattice is equipped with a natural order relation of
component-wise set inclusion from the boolean lattices $2^{\FR A_n}$.
This order relation $\le$ compares properties by generality:
if $\FR p \le \FR q$, then for all $n \ge 1$ we have $\FR p(n)
\subseteq \FR q(n)$, and $\FR p$ is \emph{sufficient} for $\FR q$
and, equivalently, $\FR q$ is \emph{necessary} for~$\FR p$.
A function $\varphi$ on the property lattice is \emph{recessive}
if for every $\FR p \in \FR P$ we have $\varphi(\FR p) \le \FR p$.
It is \emph{monotone} if $\FR p \le \FR q$ entails $\varphi(\FR p)
\le \varphi(\FR q)$.

\begin{definition} \label{def:selfadhe}
Let $\FR p$ be a property of CI~structures. The \emph{selfadhesion}
$\FR p^\sa(N)$ of $\FR p$ is the set of CI~structures $\CC L$ such that
for every $L \subseteq N$ together with an $L$-copy $M$ of $N$ and
bijection $\pi: N \to M$ there exists $\ol{\CC L} \in \FR p(NM)$
satisfying the conditions:
\begin{itemize}[itemsep=0pt]
\item $\ol{\CC L}|_N = \CC L$, $\ol{\CC L}|_M = \pi(\CC L)$, and
\item $\CI{N,M|L} \in \ol{\CC L}$.
\end{itemize}
A~property is \emph{selfadhesive} if $\FR p = \FR p^\sa$.
\end{definition}

The following is a direct consequence of \Cref{thm:Adhe}:

\begin{corollary} \label{cor:Gaussadhe}
The property $\FR g^+$ of being regular Gaussian is selfadhesive.
\end{corollary}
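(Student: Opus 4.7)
The plan is to verify both inclusions of $\FR g^+ = (\FR g^+)^{\sa}$ directly from the definitions, with all the substance already packaged inside \Cref{thm:Adhe}.

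The easy direction $(\FR g^+)^{\sa} \le \FR g^+$ follows by specializing \Cref{def:selfadhe} to $L = N$: then the only admissible $L$-copy is $M = N$, the combined ground set $NM$ is $N$, and the required extension $\ol{\CC L} \in \FR g^+(N)$ restricts to $\CC L$ under the identity bijection. This forces $\CC L = \ol{\CC L} \in \FR g^+(N)$.

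For $\FR g^+ \le (\FR g^+)^{\sa}$, I would start from $\CC L = \CIS{\Sigma}$ with $\Sigma \in \PD_N$ and, given an arbitrary $L \subseteq N$, an $L$-copy $M$, and bijection $\pi: N \to M$, transport $\Sigma$ along $\pi$ to $\Sigma' \defas \pi(\Sigma) \in \PD_M$. Because $\pi$ fixes $L$ pointwise, $\Sigma_L = \Sigma'_L$, so \Cref{thm:Adhe} supplies a unique $\Phi \in \PD_{NM}$ with $\Phi_N = \Sigma$, $\Phi_M = \Sigma'$, and $\CI{N,M|L} \in \CIS{\Phi}$. I would then set $\ol{\CC L} \defas \CIS{\Phi} \in \FR g^+(NM)$; the second bullet of \Cref{def:selfadhe} is the second bullet of the theorem verbatim, and the restriction conditions reduce to the observation that for $ijK \subseteq N$ the defining minor $\det \Phi_{iK,jK}$ depends only on the entries of the principal block $\Phi_N = \Sigma$, so $\CIS{\Phi}|_N = \CIS{\Sigma} = \CC L$ and symmetrically $\CIS{\Phi}|_M = \CIS{\Sigma'} = \pi(\CC L)$.

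There is no real obstacle: the corollary is a translation of \Cref{thm:Adhe} from the matrix level into the language of abstract CI~structures. The translation is faithful precisely because elementary Gaussian CI~relations are principal-minor conditions and therefore restrict naturally, so no further work beyond invoking the theorem is needed.
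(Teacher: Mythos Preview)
Your proposal is correct and matches the paper's approach: the paper states the corollary as ``a direct consequence of \Cref{thm:Adhe}'' and gives no further argument, while you simply spell out the translation in detail. Your easy direction is exactly the recessiveness argument that the paper later records in the proof of \Cref{lemma:AdheProp}, and your substantive direction is precisely the intended invocation of \Cref{thm:Adhe} together with the observation that restricting $\CIS{\Phi}$ to $N$ only involves entries of $\Phi_N = \Sigma$.
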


\begin{proof}
Let $\CC L \in \FR g^+(N)$ be Gaussian and $\Sigma \in \PD_N$ a realizing
matrix. For any $L \subseteq N$, \Cref{thm:Adhe} applies with $\Sigma' =
\Sigma$ and gives a matrix $\Phi$ whose CI~structure is a witness for the
structural selfadhesivity of $\CC L$ at~$L$.
\end{proof}

\begin{lemma} \label{lemma:AdheProp}
The operator $\cdot^\sa$ is recessive and monotone on the property lattice.
\end{lemma}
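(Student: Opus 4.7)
The plan is to unwind Definition~1 directly in each half of the statement; neither part should require any machinery beyond bookkeeping.

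For monotonicity, suppose $\FR p \le \FR q$ and take $\CC L \in \FR p^\sa(N)$. For each $L \subseteq N$ together with an $L$-copy $M$ of $N$ and bijection $\pi : N \to M$, selfadhesion of $\CC L$ with respect to $\FR p$ provides an extension $\ol{\CC L} \in \FR p(NM)$ satisfying $\ol{\CC L}|_N = \CC L = \ol{\CC L}|_M$ and $\CI{N,M|L} \in \ol{\CC L}$. The inclusion $\FR p(NM) \subseteq \FR q(NM)$ supplied by the hypothesis $\FR p \le \FR q$ immediately places $\ol{\CC L}$ in $\FR q(NM)$, and the two witness conditions are unchanged by this reinterpretation, so the same $\ol{\CC L}$ certifies $\CC L \in \FR q^\sa(N)$.

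For recessivity, i.e.\ $\FR p^\sa \le \FR p$, the idea is to specialize the universal quantifier over $L$ in Definition~1 to the degenerate instance $L = N$. Any $L$-copy $M$ of $N$ must satisfy $|M| = |N|$ and $M \cap N = L = N$, which forces $M = N$, and the bijection $\pi : N \to M$ that fixes $L$ pointwise must then be the identity. Under this specialization $NM = N$, the CI condition $\CI{N,M|L}$ collapses to the vacuous $\CI{\emptyset,\emptyset|N}$, and the restriction condition $\ol{\CC L}|_N = \CC L$ forces $\ol{\CC L} = \CC L$. The only surviving requirement is thus $\CC L \in \FR p(N)$, so $\FR p^\sa(N) \subseteq \FR p(N)$.

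The only real point of care is verifying that the boundary instance $L = N$ is indeed covered by Definition~1 and that both the $L$-copy condition and the CI condition degenerate as claimed; everything else is a direct transcription of the definitions.
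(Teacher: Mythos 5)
Your proposal is correct and follows the paper's own argument: monotonicity by reusing the witness $\ol{\CC L} \in \FR p(NM) \subseteq \FR q(NM)$, and recessivity by specializing to $L = N$, where the $L$-copy is forced to be $N$ itself and the witness collapses to $\CC L$ itself. You merely spell out a few details (the identity bijection, the vacuity of $\CI{\emptyset,\emptyset|N}$) that the paper leaves implicit.
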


\begin{proof}
Let $\FR p$ be a property and $\CC L \in \FR p^\sa(N)$.
In particular, $\CC L$ is selfadhesive with respect to $\FR p$ at $L = N$.
The $L$-copy $M$ of $N$ in the definition must be $M = N$ and it follows
that $\CC L \in \FR p(NM) = \FR p(N)$. This proves recessiveness
$\FR p^\sa \le \FR p$.
For monotonicity, let $\FR p \le \FR q$ and $\CC L$ in $\FR p^\sa(N)$.
Then for every $L$ with $L$-copy $M$ of $N$ there exists a certificate
for the existence of $\CC L$ in $\FR p^\sa$. This certificate lives in
$\FR p(NM) \subseteq \FR q(NM)$ which proves $\CC L \in \FR q^\sa(N)$.
\end{proof}

Thus, from monotonicity and the fact that $\FR g^+$ is a fixed point of
self\-adhesion, we can conclude that a property which is necessary
for Gaussianity remains necessary after selfadhesion. Since selfadhesion
makes properties more specific, this allows us to take known necessary
properties of Gaussian~CI and to derive new, stronger properties from them.

\begin{corollary} \label{cor:AdheNecessary}
If $\FR g^+ \le \FR p$, then $\FR g^+ \le \FR p^\sa$.
\qed
\end{corollary}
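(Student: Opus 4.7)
The plan is to chain together the two facts about selfadhesion that have just been established: the previous corollary, which asserts that $\FR g^+$ is a fixed point of the selfadhesion operator, and the monotonicity half of \Cref{lemma:AdheProp}. Both are immediately available, so no new construction is required.

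First I would apply monotonicity of $\cdot^\sa$ to the hypothesis $\FR g^+ \le \FR p$ to conclude $(\FR g^+)^\sa \le \FR p^\sa$. Then I would invoke the selfadhesivity of $\FR g^+$, which states $\FR g^+ = (\FR g^+)^\sa$ (this is the content of the corollary following \Cref{thm:Adhe}, which in turn rests on the uniqueness and positive definiteness of the adhesive extension $\Phi$ constructed explicitly in the proof of \Cref{thm:Adhe}). Combining the equality with the inequality yields $\FR g^+ = (\FR g^+)^\sa \le \FR p^\sa$, which is exactly the desired conclusion.

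There is no real obstacle here; the statement is designed precisely to package the monotonicity lemma with the Gaussian adhesive extension theorem into a tool that can be applied downstream. The only thing worth flagging in the write-up is that the equality $\FR g^+ = (\FR g^+)^\sa$ uses both directions: recessiveness gives $(\FR g^+)^\sa \le \FR g^+$ for free from \Cref{lemma:AdheProp}, while the nontrivial direction $\FR g^+ \le (\FR g^+)^\sa$ is where \Cref{thm:Adhe} is used, since it guarantees that for every $\Sigma \in \PD_N$ and every $L \subseteq N$ an adhesive extension to an $L$-copy exists inside $\PD_{NM}$, whose CI~structure certifies membership of $\CIS{\Sigma}$ in $(\FR g^+)^\sa(N)$.
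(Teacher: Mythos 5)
Your argument is correct and is exactly the paper's: the corollary is stated with an immediate \qed because it follows from monotonicity of $\cdot^\sa$ (\Cref{lemma:AdheProp}) together with the fixed-point property $\FR g^+ = (\FR g^+)^\sa$ established in the corollary after \Cref{thm:Adhe}, giving $\FR g^+ = (\FR g^+)^\sa \le \FR p^\sa$. Your additional remark correctly identifies that the nontrivial inclusion $\FR g^+ \le (\FR g^+)^\sa$ is where \Cref{thm:Adhe} enters.
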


Iterated application of selfadhesion gives rise to a chain of ever more
specific properties $\FR g^+ \le \cdots \le \FR p^{k \cdot \sa} \le \cdots
\le \FR p^{2\cdot \sa} \defas (\FR p^\sa)^\sa \le \FR p^\sa \le \FR p$.
For~each fixed component $n$ of the property, this results in a descending
chain in the finite boolean lattice $2^{\FR A_n}$ which must stabilize
eventually. However, the whole property $\FR p$ has a countably infinite
number of components and it is not clear if iterated selfadhesions
converge after finitely many steps to the limit $\FR p^{\omega\cdot \sa}
\defas \bigwedge_{k=1}^\infty {\FR p^{k\cdot \sa}}$ in the property lattice.

\begin{question}
Does $\cdot^\sa$ stabilize after the first application to ``well-behaved''
properties like~$\FR{sg}$, i.e., is $\FR{sg}^\sa = \FR{sg}^{\omega\cdot\sa}$?
Under which assumptions on a property does $\cdot^\sa$ stabilize after a
finite number of applications?
\end{question}

We now turn to the question which closure properties of $\FR p$ are recovered
for $\FR p^\sa$. For example, if for every $\CC L, \CC L' \in \FR p(N)$
we have $\CC L \cap \CC L' \in \FR p(N)$, then $\FR p$ is \emph{closed
under intersection}. Semigraphoids enjoy this closure property because
they are axiomatized by the Horn clauses~\eqref{eq:sg}. The following
\namecref{lemma:AdheIntersect} shows that all iterated selfadhesions
inherit closure under intersection.

\begin{lemma} \label{lemma:AdheIntersect}
If $\FR p$ is closed under intersection, then so is $\FR p^\sa$.
\end{lemma}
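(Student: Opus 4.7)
The plan is to produce a certificate for $\CC L \cap \CC L' \in \FR p^\sa(N)$ directly by intersecting the certificates given for $\CC L$ and $\CC L'$.

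Fix two selfadhesive structures $\CC L, \CC L' \in \FR p^\sa(N)$ and a subset $L \subseteq N$ together with an $L$-copy $M$ of $N$ and bijection $\pi$. By definition of selfadhesion, there exist $\ol{\CC L}, \ol{\CC L'} \in \FR p(NM)$ satisfying the two conditions of \Cref{def:selfadhe} with respect to $\CC L$ and $\CC L'$, respectively. I would form the candidate $\ol{\CC K} \defas \ol{\CC L} \cap \ol{\CC L'}$, which lies in $\FR p(NM)$ by the hypothesis that $\FR p$ is closed under intersection, and then verify that $\ol{\CC K}$ certifies $\CC L \cap \CC L' \in \FR p^\sa(N)$ at $L$.

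The restriction axiom follows because restriction to a subset is simply intersection with $\A_K$, and intersection commutes with itself: $\ol{\CC K}|_N = (\ol{\CC L} \cap \ol{\CC L'}) \cap \A_N = \ol{\CC L}|_N \cap \ol{\CC L'}|_N = \CC L \cap \CC L'$, and analogously for $M$ using $\pi$. For the modular-pair condition $\CI{N,M|L} \in \ol{\CC K}$, note that this non-elementary statement unfolds via the localization rule \eqref{eq:L'} into a conjunction of elementary CI~statements; since each such elementary statement belongs to both $\ol{\CC L}$ and $\ol{\CC L'}$ by assumption, it belongs to their intersection as well.

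There is essentially no obstacle here: the argument is a one-line diagram chase once one spells out that restriction and the modular-pair condition are both preserved under set-theoretic intersection in $\A_{NM}$. The only point that deserves mention is the interpretation of $\CI{N,M|L}$ as a set of elementary statements via \eqref{eq:L'}, which is the reason a non-elementary CI~statement remains valid upon intersection of structures.
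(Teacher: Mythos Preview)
Your proof is correct and follows exactly the same approach as the paper: intersect the two witnesses and check that restriction and the modular-pair condition are preserved. Your explicit unfolding of $\CI{N,M|L}$ via~\eqref{eq:L'} to justify its membership in the intersection is a slight elaboration of what the paper leaves implicit, but the argument is otherwise identical.
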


\begin{proof}
Let $\CC L, \CC L' \in \FR p^\sa(N)$ and fix a set $L \subseteq N$ and an
$L$-copy $M$ of $N$ with bijection~$\pi$. There are $\ol{\CC L}$ and
$\ol{\CC L'}$ in $\FR p(NM)$ witnessing the selfadhesivity of $\CC L$ and
$\CC L'$, respectively, at~$L$. Their intersection $\ol{\CC L} \cap \ol{\CC L'}$
is in $\FR p(NM)$ by assumption and we have
\begin{itemize}[itemsep=0pt, wide]
\item $(\ol{\CC L} \cap \ol{\CC L'})|_N
  = \ol{\CC L}|_N \cap \ol{\CC L'}|_N
  = \CC L \cap \CC L'$,
\item $(\ol{\CC L} \cap \ol{\CC L'})|_M
  = \ol{\CC L}|_M \cap \ol{\CC L'}|_M
  = \pi(\CC L) \cap \pi(\CC L') = \pi(\CC L \cap \CC L')$,
\item $\CI{N,M|L} \in \ol{\CC L} \cap \ol{\CC L'}$.
\end{itemize}
Thus it proves selfadhesivity of $\CC L \cap \CC L'$ with respect to
$\FR p$ at~$L$.
\end{proof}

Similarly to matroid theory, \emph{minors} are the natural subconfigurations
of CI~structures. They are the CI-theoretic abstraction of marginalization
and conditioning on random vectors.

\begin{definition}
Let $\CC L \subseteq \CC A_N$ and $x \in N$. The \emph{marginal} and
the \emph{conditional} of $\CC L$ on $N \setminus x$ are, respectively,
\begin{align*}
  \CC L \mathbin{\backslash} x &\defas \Set{ \CI{i,j|K} \in \CC A_{N \setminus x} : \CI{i,j|K} \in \CC L } = \CC L \cap \CC A_{N \setminus x}, \\
  \CC L \mathbin{/}          x &\defas \Set{ \CI{i,j|K} \in \CC A_{N \setminus x} : \CI{i,j|xK} \in \CC L}.
\end{align*}
A \emph{minor} of $\CC L$ is any CI~structure which is obtained by a
sequence of marginalizations and conditionings.
\end{definition}

If for every $\CC L \in \FR p(N)$ and every minor $\CC K$ of $\CC L$
on ground set $M \subseteq N$ we have $\CC K \in \FR p(M)$, then $\FR p$
is \emph{minor-closed}. Minor-closedness is necessary for the existence
of a finite axiomatization of a property~$\FR p$. More concretely,
\cite{MatusMinors} studied descriptions of properties by finitely
many ``forbidden minors'', which is under natural regularity assumptions
equivalent to having a finite axiomatic description by boolean
CI~inference formulas; cf.~\cite[Section~4.4]{Dissert} for details.

\begin{lemma}
If $\FR p \le \FR{sg}$ is minor-closed, then so is $\FR p^\sa$.
\end{lemma}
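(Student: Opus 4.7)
The plan is to exhibit, for each $L\subseteq M$ and each $L$-copy $\pi_0\colon M\to M'$, a certificate $\ol{\CC K}\in\FR p(MM')$ of selfadhesion for $\CC K$, built as a minor of a carefully chosen witness for $\CC L\in\FR p^\sa(N)$. Write $\CC K = (\CC L\mathbin{\backslash} S_{\backslash})\mathbin{/} S_{/}$, where $S_{\backslash}\sqcup S_{/} = N\setminus M$ partitions the deleted elements by operation type. The crucial design choice is to apply selfadhesion not at $L$ itself, but at the enlarged set $\tilde L\defas L\cup S_{/}$, so that the conditioning operations in the minor act on elements lying in the common restriction of the two copies; the naive choice $\tilde L=L$ instead leaves the restriction of the candidate disagreeing with $\CC K$, because removing the $N'$-side copies of $S_{/}$ from conditioning sets of statements otherwise supported on $N$ is not a valid semigraphoid inference.

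Extend $\pi_0$ to a bijection $\tilde\pi\colon N\to N'$ by declaring $\tilde\pi$ to fix $S_{/}$ pointwise and to send $S_{\backslash}$ bijectively onto a fresh set disjoint from $N\cup M'$. Then $N\cap N'=\tilde L$, so $N'$ is a $\tilde L$-copy of $N$ via~$\tilde\pi$, and the hypothesis $\CC L\in\FR p^\sa(N)$ applied at~$\tilde L$ yields $\ol{\CC L}\in\FR p(NN')$ with $\ol{\CC L}|_N=\CC L$, $\ol{\CC L}|_{N'}=\tilde\pi(\CC L)$, and $\CI{N,N'|\tilde L}\in\ol{\CC L}$. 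I would then define
\[
  \ol{\CC K} \defas \bigl(\ol{\CC L}\mathbin{\backslash} S_{\backslash}\mathbin{\backslash}\tilde\pi(S_{\backslash})\bigr)\mathbin{/} S_{/},
\]
which lies in $\FR p(MM')$ by minor-closedness of $\FR p$.

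Three certificate conditions then need to be checked by straightforward bookkeeping. For $\ol{\CC K}|_M=\CC K$: unfolding the marginalizations and the conditioning on $S_{/}$ gives $\CI{i,j|J}\in\ol{\CC K}|_M$ iff $\CI{i,j|JS_{/}}\in\ol{\CC L}$; since $ijJS_{/}\subseteq N$ and $\ol{\CC L}|_N=\CC L$, this is equivalent to $\CI{i,j|JS_{/}}\in\CC L$, which is exactly the defining condition for $\CI{i,j|J}\in\CC K$. The check $\ol{\CC K}|_{M'}=\tilde\pi(\CC K)$ is symmetric, relying on $\tilde\pi(S_{/})=S_{/}$ and $\ol{\CC L}|_{N'}=\tilde\pi(\CC L)$. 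Finally, $\CI{M,M'|L}\in\ol{\CC K}$ unfolds to $\CI{M,M'|\tilde L}\in\ol{\CC L}$, which follows from $\CI{N,N'|\tilde L}$ by semigraphoid decomposition on the disjoint pieces $M\setminus L\subseteq N\setminus\tilde L$ and $M'\setminus L\subseteq N'\setminus\tilde L$, available because $\FR p\le\FR{sg}$.

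The chief conceptual obstacle is the choice of $\tilde L$: by folding the conditioned elements $S_{/}$ of the minor into the common restriction, every CI statement appearing in the restrictions $\ol{\CC K}|_M$ and $\ol{\CC K}|_{M'}$ lies entirely within $N$ or $N'$ respectively, so that the verification reduces to invocations of $\ol{\CC L}|_N=\CC L$ and $\ol{\CC L}|_{N'}=\tilde\pi(\CC L)$ together with a single semigraphoid decomposition of $\CI{N,N'|\tilde L}$, bypassing any appeal to inference rules that $\FR p$ is not a~priori known to satisfy.
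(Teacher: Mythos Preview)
Your proof is correct and follows the same underlying idea as the paper: obtain a selfadhesivity witness for $\CC L$ on a suitably chosen larger ground set and then pass to the corresponding minor of that witness. The paper organises this by induction, treating a single marginalization $\CC L\mathbin{\backslash} x$ explicitly (witness at $L$, delete $x$ and $\pi(x)$) and then declaring the conditional case ``entirely analogous''; you instead handle an arbitrary minor $(\CC L\mathbin{\backslash} S_{\backslash})\mathbin{/} S_{/}$ in one step. Your explicit choice $\tilde L = L\cup S_{/}$ is precisely the non-obvious ingredient hidden in the paper's ``analogous'': for a single contraction by~$x$ it amounts to taking the witness at $Lx$ rather than at $L$, so that $\pi(x)=x$ and only one conditioning is needed---the naive mirror of the marginal argument (witness at $L$, contract both $x$ and $\pi(x)$) would indeed fail to recover $\CC L\mathbin{/} x$ as the restriction. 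One small point you use without comment is that every minor can be written as marginalizations followed by conditionings; this holds because $\mathbin{\backslash}$ and $\mathbin{/}$ on distinct elements commute, which is immediate from the definitions.
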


\begin{proof}
By induction it suffices to prove closedness under marginals and conditionals.
Let $\CC L \in \FR p^\sa(N)$ and $x \in N$. First, we prove that $\CC L
\mathbin{\backslash} x \in \FR p^\sa(N \setminus x)$. Fix $L \subseteq N \setminus x$
and an $L$-copy $M$ of $N$ with bijection~$\pi$ and let $\ol{\CC L}$ be
the witness for selfadhesivity of $\CC L$ at~$L$. The minor $\ol{\CC L}
\mathbin{\backslash} \Set{x, \pi(x)}$ is in $\FR p(NM \setminus \Set{x, \pi(x)})$
by assumption of minor-closedness; and note that $M \setminus \pi(x)$ is
an $L$-copy of $N \setminus x$. Moreover, $(\ol{\CC L} \mathbin{\backslash}
\Set{x, \pi(x)})|_{N \setminus x} = \CC L \mathbin{\backslash} x$ which is
isomorphic to $\pi(\CC L \mathbin{\backslash} x) = \pi(\CC L)
\mathbin{\backslash} \pi(x) = (\ol{\CC L} \mathbin{\backslash}
\Set{x, \pi(x)})|_{M \setminus \pi(x)}$.
For the last argument we need the semigraphoid property to hold for~$\FR p$.
This ensures by \cite[Lemma~2.2]{Studeny} that the localization rule~\eqref{eq:L'}
applies. This rule shows that $\CI{N,M|L} \in \ol{\CC L}$ is equivalent to
\[
  \bigwedge_{\substack{i \in N', j \in M', \\
    L \subseteq P \subseteq NM \setminus ij}} \CI{i,j|P} \in \ol{\CC L}.
\]
Applying the rule \eqref{eq:L'} again in reverse to a subset of these
elementary CI~statements shows that $\CI{(N\setminus x), (M \setminus \pi(x))|L} \in
\ol{\CC L}$ holds, which finishes the proof that $\ol{\CC L} \mathbin{\backslash}
\Set{x, \pi(x)}$ is a witness for the selfadhesion of $\CC L \mathbin{\backslash} x$
at $L$.

To prove that $\CC L \mathbin{/} x \in \FR p^\sa(N \setminus x)$, pick
any $L \subseteq N \setminus x$ and let $M$ be an $Lx$-copy of $N$ with
bijection~$\pi$. Note that $M \setminus x$ is an $L$-copy of $N \setminus x$
with bijection $\pi|_{N \setminus x}$. Let $\ol{\CC L} \in \FR p(NM)$ be a
witness for the selfadhesivity of $\CC L$ at~$Lx$ and consider the conditional
$\ol{\CC L}
\mathbin{/} x$:
\begin{align*}
     (\ol{\CC L} \mathbin{/} x)|_{N \setminus x}
  &= \Set{ (ij|K) \in \CC A_{N \setminus x} : (ij|Kx) \in \ol{\CC L} } \\
  &= (\ol{\CC L}|_N) \mathbin{/} x = \CC L \mathbin{/} x.
\end{align*}
An analogous computation shows $(\ol{\CC L} \mathbin{/} x)|_{M \setminus x}
= \pi(\CC L \mathbin{/} x)$ using that $x$ is fixed by~$\pi$. Moreover, we
have $\CI{N,M|Lx} \in \ol{\CC L}$ which is equivalent to $\CI{(N\setminus x),
(M\setminus x)|Lx} \in \ol{\CC L}$ since $x \in N \cap M$. But this entails
$\CI{(N\setminus x),(M\setminus x)|L} \in \ol{\CC L} \mathbin{/} x$ and hence
$\CC L \mathbin{/} x$ is selfadhesive at $L$ with witness~$\ol{\CC L}
\mathbin{/} x$.
\end{proof}

\begin{question}
Does $\FR{sg}^\sa$ have a finite axiomatization? Is finite axiomatizability
or finite non-axiomatizability in general preserved by selfadhesion?
\end{question}

\subsection{Selfadhesivity testing}

Whether or not a CI~structure $\CC L \subseteq \A_N$ is in $\FR p^\sa(N)$
can be checked algorithmically if an \emph{oracle} $\SF p(\tilde{\CC L})$
for the property~$\FR p$ is available. This oracle is a subroutine which
receives a \emph{partially defined} CI~structure $\tilde{\CC L}$ over~$N$,
i.e., a set of CI~statements or negated CI~statements specifying constraints
on some statements from~$\A_N$. Then~$\SF p$ decides if $\tilde{\CC L}$ can
be extended to a member of~$\FR p(N)$.

\begin{algorithm}[h!]
\caption{Blackbox selfadhesion membership test}
\algblockdefx[function]{BeginFunction}{EndFunction}{\textbf{function}\xspace}{\textbf{end function}}
\begin{algorithmic}[1]
\BeginFunction $\textsf{is-selfadhesive}(\CC L, \SF p)$ \Comment{tests if $\CC L \in \FR p^\sa(N)$}
\ForAll{$L \subseteq N$}
  \State $(M, \pi) \gets \text{$L$-copy of $N$ with bijection $\pi: N \to M$}$
  \State $\tilde{\CC L} \gets \emptyset$
  \ForAll{$s \in \A_N$}
    \State \textbf{if} $s \in \CC L$     \textbf{then} $\tilde{\CC L} \gets \tilde{\CC L} \cup \Set{ \hphantom\neg s, \hphantom\neg \pi(s) }$
    \State \textbf{if} $s \not\in \CC L$ \textbf{then} $\tilde{\CC L} \gets \tilde{\CC L} \cup \Set{          \neg s,          \neg \pi(s) }$
  \EndFor
  \State $\tilde{\CC L} \gets \tilde{\CC L} \cup \Set{ \CI{N,M|L} }$ \Comment{or equivalent statements via \eqref{eq:L'}}
  \State \textbf{if} $\SF p(\tilde{\CC L}) = \TT{false}$ \textbf{then} \textbf{return} $\TT{false}$
\EndFor
\State \textbf{return} $\TT{true}$
\EndFunction
\end{algorithmic}
\label{alg:Selfadhe}
\end{algorithm}

Each component $\FR p(n)$ of a property $\FR p$ is a set of subsets of~$\CC A_n$.
There are two principal ways of representing this set: \emph{explicitly}, by
listing its elements, or \emph{implicitly}, by listing a set of abstract axioms
in the form of boolean formulas which all its elements and no other CI~structures
satisfy. A typical application of \Cref{alg:Selfadhe} takes in both, an explicit
description of $\FR p(n)$ to iterate over, as well as an implicit description~$\SF p$
of~$\FR p$ to perform selfadhesivity testing for ground sets of sizes between $n$
and~$2n$. It~outputs only an explicit description of $\FR p^\sa$ at a given index~$n$.
Transforming this explicit description obtained from \Cref{alg:Selfadhe}
into an implicit description to call the algorithm again is akin to transforming
a disjunctive normal form of a boolean formula into a conjunctive normal form, which
is a hard problem. Moreover, it would be required to compute $\FR p^\sa(m)$
explicitly for all $n \le m \le 2n$. This makes it difficult to iterate selfadhesions.

\begin{remark}
The proof of \Cref{lemma:AdheProp} shows that a CI~structure $\CC L$ satisfies
selfadhesivity with respect to $\FR p$ at $L = N$ if and only if $\CC L$ has
property $\FR p$. In the other extreme case, every structure in $\FR p$ is
selfadhesive at $L = \emptyset$ if $\FR p$ is closed under the direct sum
operation introduced in \cite{MatusMatroids}.
Many useful properties are closed under direct sums because this operation
mimics the independent joining of two random vectors; see \cite{MatusClassification}.
If this is known a~priori, some selfadhesivity tests can be skipped.
\end{remark}

We now proceed to apply \Cref{alg:Selfadhe} to two practically tractable
necessary conditions for Gaussian realizability. The computational results
allow, via \Cref{cor:AdheNecessary}, the deduction of new CI~inference
axioms for Gaussians on five random variables.

\subsection{Structural semigraphoids}

It is easy to see that every Gaussian CI~structure $\CC L = \CIS{\Sigma}$
can also be obtained from the \emph{correlation matrix} $\Sigma'$ of the
original distribution $\Sigma$. Hence, we may assume that $\Sigma$ is a
correlation matrix. In that case, the \emph{multiinformation vector} of
$\Sigma$ is the map $m_\Sigma: 2^N \to \BB R$ given by $m_\Sigma(K)
\defas -\sfrac12 \log \det \Sigma_K$.
This function satisfies $m_\Sigma(\emptyset) = m_\Sigma(i) = 0$ for all
$i \in N$ and it is super\-modular by the Koteljanskii inequality; see \cite{JohnsonBarrett}.
Similarly to entropy vectors, the equality condition in these inequalities
characterizes conditional independence: ${\CId{ij|K} \cdot m_\Sigma = 0}
\;\Leftrightarrow\; {\CI{i,j|K} \in \CIS{\Sigma}}$.

In the nomenclature of \cite[Chapter~5]{Studeny}, $m_\Sigma$ is an
\emph{$\ell$-standardized super\-modular function}. The functions having
these two properties form a rational, polyhedral cone $\BO{S}_N$ of
codimension $|N| + 1$ in $\BB R^{2^N}$. Each of its facets is
given by equality in precisely one of the supermodular inequalities
$\CId{ij|K} \le 0$ for an elementary CI~statement $\CI{i,j|K} \in \A_N$.
Since the facets of this cone are in bijection with CI~statements,
it is natural to identify faces (intersections of facets) dually with
CI~structures (unions of CI~statements). The property of CI~structures
defined by arising from a face of $\BO{S}_N$ is that of \emph{structural
semigraphoids}, denoted by $\FR{sg}_*$, and it is necessary for $\FR g^+$
since every Gaussian CI~structure $\CIS{\Sigma}$ is associated with the
unique face on which  $m_\Sigma \in \BO{S}_N$ lies in the relative interior.

\begin{remark}
Structural semigraphoids can be equivalently defined via the face
lattice of the cone of \emph{tight} polymatroids, i.e., polymatroids~$h$
with $h(N) = h(N \setminus i)$ for every $i \in N$. The tightness
condition poses no extra restrictions: for every polymatroid, there
exists a tight polymatroid inducing the same pure CI~statements
(only differing in the functional dependences); cf.~\cite[Section~III]{MatusCsirmaz}.
A~proof of the equivalence is contained in \cite[Section~6.3]{Dissert},
\end{remark}

Deciding whether a partially defined CI~structure $\tilde{\CC L}$
is consistent with the structural semigraphoid property is a question
about the incidence structure of the face lattice of~$\BO{S}_N$.
Such questions reduce to the feasibility of a rational linear program
as previously demonstrated by \cite{EfficientCI}. \Cref{alg:Struct}
relies on this insight by setting up the polyhedral description of the
structural semigraphoidality test and then delegating the computation
to specialized linear programming software.

\begin{algorithm}[h!]
\caption{Structural semigraphoid consistency test}
\algblockdefx[function]{BeginFunction}{EndFunction}{\textbf{function}\xspace}{\textbf{end function}}
\begin{algorithmic}[1]
\BeginFunction $\textsf{is-structural}(\tilde{\CC L})$ \Comment{tests if $\tilde{\CC L}$ is consistent with $\FR{sg}_*(N)$}
\State $P \gets \Set{ \text{$m(\emptyset) = m(i) = 0$ for all $i \in N$} }$ \Comment{$H$ description of polyhedron}
\ForAll{$s \in \A_N$}
  \State \textbf{if} $\hphantom\neg s \in \tilde{\CC L}$ \textbf{then} $P \gets P \cup \Set{ -\CId{s} \cdot m  =  0 }$
  \State \textbf{if} $         \neg s \in \tilde{\CC L}$ \textbf{then} $P \gets P \cup \Set{ -\CId{s} \cdot m \ge 1 }$
  \State \Comment{The condition $-\CId{s} \cdot m > 0$ is equivalent to $\ge 1$ in a cone}
  \State \textbf{else} $P \gets P \cup \Set{ -\CId{s} \cdot m \ge 0 }$
\EndFor
\State \textbf{return} $\textsf{is-feasible}(P)$ \Comment{call an \TT{LP} solver}
\EndFunction
\end{algorithmic}
\label{alg:Struct}
\end{algorithm}

Equipped with this oracle for $\FR{sg}_*$, \Cref{alg:Selfadhe} can be applied
to compute membership in~$\FR{sg}_*^\sa$. We run the structural selfadhesivity
test for the \emph{gaussoids} of \cite{LnenickaMatus} because they are easily
computable candidates for Gaussian CI~structures; see also \cite{Geometry}.
For~$n=4$ random variables, the gaussoids which are structural semigraphoids
already coincide with the realizable Gaussian structures (as classified in
\cite{LnenickaMatus}) and selfadhesivity offers no improvement.
This is no longer the case for five random variables:

\begin{computation}
There are $508\,817$ gaussoids on $n=5$ random variables modulo isomorphy.
Of~these $336\,838$ are structural semigraphoids and $335\,047$ of them are
selfadhesive with respect to~$\FR{sg}_*$.
\end{computation}

A semigraphoid $\CC L$ is structural if and only if it is induced by a
polymatroid, i.e., $\CC L = \CIS{h}$. In~this case, two distinct notions of
selfadhesivity can be applied to $\CC L$: the first is Matúš's definition
of selfadhesivity for the inducing polymatroid~$h$; and the second is
structural selfadhesivity from \Cref{def:selfadhe} for the CI~structure
$\CC L$ with respect to the property~$\FR{sg}_*$. Analogously to
\Cref{cor:Gaussadhe}, one sees that the second condition is implied by the first.
The existence of a selfadhesive inducing polymatroid can be efficiently
tested for ground set size four based on the polyhedral description of
the cone of selfadhesive 4-polymatroids from \cite[Corollary~6]{MatusAdhesive}.

\begin{computation}
Out of the $1\,285$ isomorphy representatives of $\FR{sg}_*(4)$,
exactly $1\,224$ are in $\FR{sg}_*^\sa(4)$. Each of them is induced by
a selfadhesive 4-polymatroid.
\end{computation}

\begin{question}
Is every element of $\FR{sg}_*^\sa(N)$ induced by a selfadhesive
$N$-polymatroid, for every finite set~$N$?
\end{question}

\subsection{Orientable gaussoids}

Recall from \cite{Geometry} that a gaussoid is \emph{orientable} if it is
the support of an oriented gaussoid. Oriented gaussoids are a variant of
CI~structures in which every statement $\CI{i,j|K}$ has a sign $\Set{ \TT0,
\TT+, \TT- }$ attached, indicating conditional independence, positive or
negative partial correlation, respectively. Oriented gaussoids are
axiomatically defined and therefore \TT{SAT} solvers are ideally suited
to decide the consistency of a partially defined CI~structure with these
axioms. The property of orientability, denoted $\FR{o}$, is obtained from
the set of oriented gaussoids by mapping all CI~statements oriented as
$\TT0$ to elements of a CI~structure and all statements oriented $\TT+$
or $\TT-$ to non-elements. To~facilitate orientability testing, one
allocates two boolean variables $V^{\TT0}_s$ and $V^{\TT+}_s$ for every
CI~statement~$s$. The~former indicates whether $s$ is $\TT0$ or not while
the latter indicates, provided that $V^{\TT0}_s$ is false, if $s$ is
$\TT+$ or $\TT-$.
Further details about oriented gaussoids, their axioms and use of \TT{SAT}
solvers for CI~inference are available in \cite{Geometry}.
\Cref{alg:Orient} gives a condensed account of the algorithm.

\begin{algorithm}[h!]
\caption{Orientable gaussoid consistency test}
\algblockdefx[function]{BeginFunction}{EndFunction}{\textbf{function}\xspace}{\textbf{end function}}
\begin{algorithmic}[1]
\BeginFunction $\textsf{is-orientable}(\tilde{\CC L})$ \Comment{tests if $\tilde{\CC L}$ is consistent with $\FR o(N)$}
\State $\varphi \gets \textsf{oriented-gaussoid-axioms}(N)$ \Comment{boolean formula}
\ForAll{$s \in \A_N$}
  \State \textbf{if} $\hphantom\neg s \in \tilde{\CC L}$ \textbf{then} $\varphi \gets \varphi \wedge [V^{\TT0}_s = \TT{true}]$
  \State \textbf{if} $         \neg s \in \tilde{\CC L}$ \textbf{then} $\varphi \gets \varphi \wedge [V^{\TT0}_s = \TT{false}]$
  \State $\varphi \gets \varphi \wedge [V^{\TT0}_s = \TT{true} \Rightarrow V^{\TT+}_s = \TT{false}]$
  \State \Comment{there are only three signs $\Set{\TT0, \TT+, \TT-}$}
\EndFor
\State \textbf{return} $\textsf{is-satisfiable}(\varphi)$ \Comment{call a \TT{SAT} solver}
\EndFunction
\end{algorithmic}
\label{alg:Orient}
\end{algorithm}

\begin{computation}
All orientable gaussoids on $n=4$ are Gaussian.
Of the $508\,817$ isomorphy classes of gaussoids on $n=5$ precisely
$175\,215$ are orientable and $168\,010$ are selfadhesive with respect
to orientability.
\end{computation}

\subsection{Structural orientable gaussoids}

The meet $\FR{sg}_* \wedge \FR{o}$ of structural semigraphoids
and orientable gaussoids in the property lattice is likewise necessary
for Gaussianity and an oracle for it can be combined from the oracles
of its two constituents. Its selfadhesion yields no improvement over
apparently weaker properties:

\begin{computation}
The properties $\FR{sg}_* \wedge \FR o$ and $\FR{sg}_*^\sa \wedge \FR o$
coincide at $n=5$ with $175\,139$ isomorphy types. On the other hand,
$\FR{sg}_* \wedge \FR{o}^\sa$, $\FR{sg}_*^\sa \wedge \FR{o}^\sa$ and
$(\FR{sg}_* \wedge \FR o)^\sa$ coincide at $n=5$ with $167\,989$~types.
\end{computation}

Up to a few isolated examples in the literature, this represents the
currently best known upper bound in the classification of realizable
Gaussian conditional independence structures on five random variables.
Examination of the difference $(\FR{sg}_* \wedge \FR{o})(5) \setminus
(\FR{sg}_* \wedge \FR{o})^\sa(5)$ reveals new axioms for Gaussian CI
beyond structural semigraphoids and orientability,~e.g.:
\begin{align*}
  \CI{i,j|km} \wedge \CI{i,m|l}  \wedge \CI{j,k|i}  \wedge \CI{j,m}   \wedge \CI{k,l} &\;\Rightarrow\; \CI{i,j}, \\
  \CI{i,k|jl} \wedge \CI{i,l|km} \wedge \CI{j,k|i}  \wedge \CI{j,m|k} \wedge \CI{k,l} &\;\Rightarrow\; \CI{i,k}, \\
  \CI{i,k|j}  \wedge \CI{i,l|jm} \wedge \CI{j,k|il} \wedge \CI{j,m|k} \wedge \CI{k,l} &\;\Rightarrow\; \CI{i,k}.
\end{align*}
The MathRepo page corresponding to this paper contains code and more
information on how to obtain these inference rules algorithmically.
Due to the large amount of data involved and the complexity of minimizing
boolean formulas, it is currently not known how many genuinely new and
mutually irredundant axioms are encoded in the results.

\subsubsection*{Mathematical software and data repository}
\TT{SoPlex v4.0.0} was used to solve rational linear programs exactly;
see \cite{Soplex2012,Soplex2015,SCIP2018}. To check orientability, we
used the incremental \TT{SAT} solver \TT{CaDiCaL v1.3.1} by
\cite{CaDiCaL} and to enumerate satisfying assignments the \TT{AllSAT}
solver \TT{nbc\_minisat\_all v1.0.2} by \cite{TodaSAT}. \Cref{ex:PRnotSelfadhe}
was found using Wolfram \TT{Mathematica v11.3} \cite{Mathematica}.
The~source code and results for all computations are available on the
supplementary MathRepo website of the MPI-MiS and the KEEPER
of the Max-Planck Society:
\begin{center}
  \begin{tabular}{rl}
  MathRepo: & \url{https://mathrepo.mis.mpg.de/SelfadhesiveGaussianCI/} \\
  KEEPER:   & \url{https://keeper.mpdl.mpg.de/d/fbfe463162e94a14ac28/}
  \end{tabular}
\end{center}

\subsubsection*{Acknowledgement}
This project was started at the Otto-von-Guericke-Uni\-ver\-si\-tät
Magdeburg and finished at the Max-Planck Institute for Mathematics in
the Sciences, Leipzig. I~would like to thank the OvGU and the MPI for
providing me with the resources to carry out the computations whose
results are presented here. I also wish to thank the anonymous
referees for their thorough and critical reading of this manuscript,
and especially for drawing my attention to the result mentioned in
\Cref{rem:Machinery}.

\bibliographystyle{tboege}
\bibliography{gaussadhe}

\end{document}